\documentclass{conm-p-l}

\copyrightinfo{2013}{}

\setcounter{page}{1}

\usepackage{graphicx}
\usepackage{subfigure}

\usepackage{amssymb,amsmath,amsthm,amscd}

\newtheorem{theorem}{Theorem}[section]

\theoremstyle{definition}

\theoremstyle{remark}
\newtheorem{remark}[theorem]{Remark}

\numberwithin{equation}{section}



\newcommand{\e}{\epsilon}

\newcommand{\dl}{\delta}
\newcommand{\Dl}{\Delta}

\newcommand{\ra}{\rightarrow}
\newcommand{\al}{\alpha}

\newcommand{\sg}{\sigma}

\newcommand{\pa}{\partial}

\newcommand{\na}{\nabla}


\begin{document}

\title[The distinction of turbulence from chaos]{The distinction of turbulence from chaos --- rough dependence on initial data}

\author{Y. Charles Li}
\address{Department of Mathematics, University of Missouri, 
Columbia, MO 65211, USA}
\email{liyan@missouri.edu}
\urladdr{http://www.math.missouri.edu/~cli}

\curraddr{}
\thanks{}

\subjclass{Primary 76, 35; Secondary 34}

\date{}

\dedicatory{}

\keywords{Rough dependence on initial data, sensitive dependence on initial 
data, chaos, turbulence}

\begin{abstract}
I propose a new theory on the nature of turbulence: when the Reynolds number is large, 
violent fully developed turbulence is due to ``rough dependence on initial data" rather 
than chaos which is caused by ``sensitive dependence on initial data"; when the Reynolds 
number is moderate, (often transient) turbulence is due to chaos. The key in the validation 
of the theory is estimating the temporal growth of the initial perturbations with the 
Reynolds number as a parameter. Analytically, this amounts to estimating the temporal 
growth of the norm of the derivative of the solution map of the Navier-Stokes equations, for
which here I obtain an upper bound $e^{C \sqrt{t Re} + C_1 t}$. This bound clearly indicates 
that when the Reynolds number is large, the temporal growth rate can potentially be large 
in short time, i.e. rough dependence on initial data. 
\end{abstract}

\maketitle

\section{Introduction}

For a long time, fluid dynamists have suspected that turbulence is ``more than" chaos.
Many chaoticians including the current author have believed that turbulence is ``no more than" 
chaos in Navier-Stokes equations. A recent result \cite{Inc12} on Euler equations forced 
the current author to have to change mind. 

The signature of chaos is ``sensitive dependence on initial data"; here I want to 
address ``rough dependence on initial data" which is very different from sensitive 
dependence on initial data. For solutions (of some system) that exhibit sensitive 
dependence on initial data, their initial small deviations usually amplify exponentially 
(with an exponent named Liapunov exponent), and it takes time for the deviations 
to accumulate to substantial amount (say order $O(1)$ relative to the small initial 
deviation). If $\e$ is the initial small deviation, and $\sg$ is the Liapunov exponent, 
then the time for the deviation to reach $1$ is about $\frac{1}{\sg} \ln \frac{1}{\e}$. On 
the other hand, for solutions that exhibit rough dependence on initial data, their initial 
small deviations can reach substantial amount instantly. Take the 3D or 2D Euler equations 
of fluids as the example, for any $t \neq 0$ (and small for local existence), the solution 
map that maps the initial condition to the solution value at time $t$ is nowhere locally 
uniformly continuous and nowhere differentiable \cite{Inc12}. In such a case, any
small deviation of the initial condition can potentially reach substantial amount 
instantly. My theory is that the high Reynolds number violent turbulence is 
due to such rough dependence on initial data, rather than sensitive dependence on 
initial data of chaos. When the Reynolds number is sufficiently large (the viscosity 
is sufficiently small), even though the solution map of the Navier-Stokes equations 
is still differentiable, but the derivative of the solution map should be potentially extremely 
large everywhere (of order $e^{C \sqrt{t Re}}$ as shown below) since the solution map of 
the Navier-Stokes equations approaches 
the solution map of the Euler equations when the viscosity approaches zero (the Reynolds 
number approaches infinity). Such everywhere large derivative of the solution map 
of the Navier-Stokes equations manifests itself as the development of violent 
turbulence in a short time. In summary, moderate Reynolds number turbulence is due to sensitive dependence on initial data of chaos, while large 
enough Reynolds number turbulence is due to rough dependence on initial 
data. This is an important new understanding on the nature of turbulence \cite{Li13}. 
One may call this the new complexity of turbulence \cite{Mit09} \cite{LLW13}.

The type of rough dependence on initial data shared by the solution map of 
the Euler equations is difficult to find in finite dimensional systems. The solution map of the Euler equations is still continuous in initial data. Such a solution 
map (continuous, but nowhere locally uniformly continuous) does not exist in finite dimensions. This may be the reason that one usually finds chaos (sensitive 
dependence on initial data) rather than rough dependence on initial data in finite dimensions. If the 
solution map of some special finite dimensional system is nowhere continuous, then 
the dependence on initial data is rough, but may be too rough to have any realistic 
application. In infinite dimensions, irregularities of solution maps are quite common, e.g. 
in water wave equations \cite{CL13} \cite{CMSW14}. 

Even though the relation between Liapunov exponent and chaos (and instability) can be complicated \cite{LK07}, generically a positive Liapunov expoent is a good indicator of chaotic dynamics. In connection with turbulence, Liapunov exponent and its extensions have been studied \cite{PV94} \cite{ABCPV97}. When the Reynolds number is moderate, homoclinic orbits, strange attractors and bifurcation routes to chaos of the Navier-Stokes equations are all dug out \cite{VK11} \cite{KUV12} \cite{KE12}. To distinguish that turbulence is exhibiting rough or sensitive dependence on initial data, one needs to study the derivative of the solution map.

\section{Derivative of the solution map}

Let $S^t$ be the solution map which maps the initial value $u(0)$ to the solution's value $u(t)$ at time $t$. So for any fixed time $t$, $S^t$ is a map defined on the phase space. The temporal growth of the norm of the derivative $DS^t$ of the solution map $S^t$ describes the amplification of the initial perturbation. The well-known Liapunov exponent is defined by $DS^t$:
\[
\sg = \lim_{t \ra +\infty} \frac{1}{t} \ln \| DS^t \| .
\]
A positive Liapunov exponent implies that nearby orbits deviate exponentially in time, i.e. sensitive dependence on initial data. The Liapunov exponent is a measure of long term temporal growth of the norm 
of the derivative $DS^t$. The temporal property of the norm of $DS^t$ can of course be much more complicated than simple long term exponential growth. In particular, the norm of $DS^t$ can be large in short time (i.e. super fast temporal growth). In such a case, the dynamics (described by $S^t$) exhibits short term unpredictability (i.e. rough dependence on initial data). One can define the following exponent
\[
\eta = \lim_{t \ra 0^+} \frac{1}{t^\al} \ln \| DS^t \|, \ \ \text{where } \al >0 . 
\]
When $\eta$ is large (e.g. approaching infinity as a parameter approaches a limit), one has short term unpredictability. In the case of Navier-Stokes equations to be studied later, $\eta$ can potentially be as large as $C \sqrt{Re}$ with $\al = 1/2$.

\section{The derivative estimate for Navier-Stokes equations} 

To verify the rough dependence on initial data for the solution map of the Navier-Stokes equations, I 
need to estimate the temporal growth of the norm of the derivative of the solution map of the Navier-Stokes equations. The Navier-Stokes equations are given by
\begin{eqnarray}
& & u_t + \frac{1}{Re} \Dl u  = - \na p - u\cdot \na u , \label{ONS1} \\
& & \na \cdot u = 0 , \label{ONS2}
\end{eqnarray}
where $u$ is the $d$-dimensional fluid velocity ($d=2,3$), $p$ is the fluid pressure, and 
$Re$ is the Reynolds number. Applying the Leray projection, one gets
\begin{equation}
u_t + \frac{1}{Re} \Dl u  = - \mathbb{P} \left ( u\cdot \na u \right ) . \label{NS} 
\end{equation}
The Leray projection is an orthogonal projection in $L^2(\mathbb{R}^d)$, given by
\[
\mathbb{P} g = g - \na \Dl^{-1} \na \cdot g . 
\]
Setting the Reynolds number to infinity $Re = 0$, the Navier-Stokes equation (\ref{NS}) reduces to 
the Euler equation
\begin{equation}
u_t = - \mathbb{P} \left ( u\cdot \na u \right ) . \label{E} 
\end{equation}
Let $H^n(\mathbb{R}^d)$ be the Sobolev space of divergence free fields. By the local wellposedness 
result of Kato \cite{Kat72} \cite{Kat75}, when $n > \frac{d}{2} +1$ ($d=2,3$), for any 
$u \in H^n(\mathbb{R}^d)$, there 
is a neighborhood $B$ and a short time $T>0$, such that for any $v \in B$ there exists a unique 
solution to the Navier-Stokes equation (\ref{NS}) in $C^0([0,T]; H^n(\mathbb{R}^d))$; as $Re 
\ra \infty$, this solution converges to that of the Euler equation (\ref{E}) in the same space. For any 
$t \in [0, T]$, let $S^t$ be the solution map:
\begin{equation}
S^t \  :  \   B  \mapsto H^n(\mathbb{R}^d), \  S^t (u(0)) = u(t), 
\end{equation}
which maps the initial condition to the solution's value at time $t$. The solution map 
is continuous for both Navier-Stokes equation (\ref{NS}) and Euler equation (\ref{E}) \cite{Kat72} 
\cite{Kat75}. A recent result of Inci \cite{Inc12} shows that for Euler equation (\ref{E}) the solution map is 
nowhere differentiable. Then it is natural to theorize that the norm of the derivative of the solution 
map approaches infinity (at most places) as the Reynolds number approaches infinity. Estimating the temporal growth of the norm of the derivative of the solution map is a daunting task. The entire subject of hydrodynamic stability is a special case where the base solution (where the derivative of the solution map is taken) is steady. Below I obtain an upper bound on the temporal growth of the norm of the derivative of the solution map. I believe the upper bound is sharp, i.e. there is no smaller upper bound.

\begin{theorem} 
The norm of the derivative of the solution map of Navier-Stokes equation (\ref{NS})  has the 
upper bound,
\begin{equation}
\| DS^t(u(0)) \| \leq e^{C \sqrt{t Re} + C_1 t}, \label{UB} 
\end{equation}
where
\[
C = \frac{8}{\sqrt{2e}} \max_{\tau \in [0,T]} \| u(\tau )\|_n, \ \ C_1 = 4\max_{\tau \in [0,T]} \| u(\tau )\|_n
= \frac{\sqrt{2e}}{2} C.
\]
\label{UBT}
\end{theorem}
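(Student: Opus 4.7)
The plan is to reduce $\|DS^t(u(0))\|$ to an $H^n$ energy estimate for the linearization of (\ref{NS}) about the base flow $u(t)$. The variation $v(t):=DS^t(u(0))v_0$ satisfies
\[
v_t - \tfrac{1}{Re}\Delta v = -\mathbb{P}\bigl(v\cdot\nabla u + u\cdot\nabla v\bigr),\qquad v(0)=v_0,
\]
so that $\|DS^t(u(0))\| = \sup_{v_0\neq 0}\|v(t)\|_n/\|v_0\|_n$ and everything comes down to bounding $\|v(t)\|_n$ in terms of $\|v_0\|_n$.

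Step one is to pair this equation with $\partial^\alpha v$ and sum over $|\alpha|\leq n$. The viscous term contributes $-\tfrac{1}{Re}\|\nabla v\|_n^2$. The transport piece $u\cdot\nabla v$ loses its leading contribution to $\nabla\cdot u=0$, leaving a Kato-Ponce commutator bounded by $c_1\|u\|_n\|v\|_n^2$ via the embedding $H^{n-1}\hookrightarrow L^\infty$ (legitimate since $n>d/2+1$). The stretching piece $v\cdot\nabla u$ at first seems to require $\|u\|_{n+1}$ in its worst Leibniz splitting, but one integration by parts exploiting $\nabla\cdot v=0$ shifts that extra derivative back onto $v$, producing instead a mixed quantity $\leq c_2\|u\|_n\|v\|_n\|\nabla v\|_n$ whose $\|\nabla v\|_n$ factor can be traded against the viscous dissipation. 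Combining these estimates yields
\[
\tfrac{1}{2}\tfrac{d}{dt}\|v\|_n^2 \leq -\tfrac{1}{Re}\|\nabla v\|_n^2 + c_2\|u\|_n\|v\|_n\|\nabla v\|_n + c_3\|u\|_n\|v\|_n^2.
\]

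Step two is to close this by the weighted Young step $c_2\|u\|_n\|v\|_n\|\nabla v\|_n\leq \tfrac{\mu}{Re}\|\nabla v\|_n^2 + \tfrac{c_2^2 Re}{4\mu}\|u\|_n^2\|v\|_n^2$, apply Gronwall, and optimize in the free parameter $\mu$. The characteristic $\sqrt{tRe}$ exponent should emerge at the optimization step: the optimal $\mu$ scales like $\sqrt{Re/t}$, so that the combination $\tfrac{c_2^2 M^2 Re}{4\mu}t$ balances against a complementary contribution and integrates to $2\sqrt{\alpha\beta\,tRe}$. The numerical prefactor $\tfrac{8}{\sqrt{2e}}$ is consistent with a one-variable maximum of the form $\max_{x\geq 0}xe^{-x^2}=\tfrac{1}{\sqrt{2e}}$, which is also the constant appearing in the sharp heat-semigroup smoothing estimate $\|e^{\tau\Delta/Re}\nabla\cdot\|_{H^n\to H^n}=\sqrt{Re/(2e\tau)}$.

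The main obstacle is precisely this optimization. A single fixed Young weight yields only an exponent linear in $Re\cdot t$, which is strictly weaker than the stated $\sqrt{tRe}$ bound for large $tRe$, so the proof cannot stop there. The remedy is either to let $\mu$ be genuinely time-dependent throughout the Gronwall integration and then optimize pointwise, or to recast the argument in Duhamel form and exploit the smoothing $\|e^{(t-s)\Delta/Re}\nabla\cdot\|_{H^n\to H^n}\lesssim\sqrt{Re/(t-s)}$ inside a weakly singular Gronwall step, from which the sharp rate must be delicately extracted. Achieving exactly $C=\tfrac{8}{\sqrt{2e}}M$ and $C_1=4M$ at the end will require careful tracking of every numerical factor.
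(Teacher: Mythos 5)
Your primary route --- an $H^n$ energy estimate for the linearized equation closed by a weighted Young inequality --- cannot deliver the stated bound, and the obstacle you flag at the end is not a loose end but the entire content of the theorem. Pointwise in time, the best possible absorption of the cross term into the dissipation is
\[
2c_2M\|v\|_n\|\nabla v\|_n-\tfrac{2}{Re}\|\nabla v\|_n^2\ \le\ \sup_{X\ge 0}\Bigl(2c_2M\|v\|_n\sqrt{X}-\tfrac{2}{Re}X\Bigr)\ =\ \tfrac{c_2^2M^2Re}{2}\,\|v\|_n^2,
\]
and this supremum is attained; since $\|\nabla v\|_n$ is not otherwise controlled by $\|v\|_n$, no choice of Young weight $\mu$, constant or time-dependent, can extract more than $e^{cM^2Re\,t}$ from the differential inequality alone. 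Your ``remedy (a)'' therefore does not exist: the $\sqrt{tRe}$ exponent is not hiding in an optimization over $\mu$.

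What the paper actually does is your ``remedy (b),'' which you name but do not carry out. One differentiates the Duhamel representation of (\ref{NS}) in $u(0)$, bounds the bilinear term in $H^{n-1}$ by a tame product estimate, $\|\mathbb{P}(du\cdot\nabla u+u\cdot\nabla\,du)\|_{n-1}\le 4M\|du\|_n$ with $M=\max_{\tau\in[0,T]}\|u(\tau)\|_n$, and recovers the lost derivative through the semigroup smoothing bound $\|e^{\frac{s}{Re}\Delta}w\|_n\le(\tfrac{1}{\sqrt{2e}}\sqrt{Re/s}+1)\|w\|_{n-1}$ (your $\max_{x\ge0}xe^{-x^2}$ observation is indeed the source of the $\tfrac{1}{\sqrt{2e}}$). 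This produces the weakly singular Volterra inequality
\[
\|du(t)\|_n\ \le\ \|du(0)\|_n+4M\int_0^t\Bigl(\tfrac{1}{\sqrt{2e}}\sqrt{\tfrac{Re}{t-\tau}}+1\Bigr)\|du(\tau)\|_n\,d\tau,
\]
from which the paper reads off $e^{C\sqrt{tRe}+C_1t}$ with $C=2\cdot\frac{4M}{\sqrt{2e}}$ and $C_1=4M$ by ``Gronwall.'' That last step is exactly the one you defer as ``delicately extracted,'' and it is genuinely delicate: the kernel is singular at $\tau=t$, not $\tau=0$, so the exponential-of-the-integral formula is not the standard Gronwall lemma, and the generalized (Henry) singular Gronwall inequality returns a Mittag--Leffler bound whose large-argument asymptotics are $e^{\pi b^2t}$ rather than $e^{2b\sqrt{t}}$. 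In short, the one step your proposal omits is precisely the step where the $\sqrt{tRe}$ exponent is won, so the proposal as written does not close the proof.
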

\begin{proof}
Applying the method of variation of parameters, one converts the Navier-Stokes equation (\ref{NS}) into
the integral equation
\begin{equation}
u(t) = e^{\frac{t}{Re}\Dl } u(0) - \int_0^t e^{\frac{t-\tau }{Re}\Dl } \mathbb{P} \left ( u\cdot \na u \right ) d\tau .                   \label{INS} 
\end{equation}
Taking the differential in $u(0)$, one gets the differential form
\begin{equation}
du(t) = e^{\frac{t}{Re}\Dl } du(0) - \int_0^t e^{\frac{t-\tau }{Re}\Dl } \mathbb{P} \left ( du\cdot \na u 
+u \cdot \na du \right ) d\tau .   \label{DNS} 
\end{equation}
The norm of the derivative $DS^t(u(0)) = \pa u(t) / \pa u(0)$ is given by 
\[
\| DS^t(u(0)) \| = \sup_{du(0)} \frac{\| du(t)\|_n}{\| du(0)\|_n} .
\]
Applying the inequality 
\[
\| e^{\frac{t}{Re}\Dl } u \|_n \leq \left (\frac{1}{\sqrt{2e}} \sqrt{\frac{Re}{t}} + 1 \right )\| u \|_{n-1} ,
\]
one gets
\begin{eqnarray*}
& & \| du(t) \|_n \leq  \| du(0) \|_n + \\
& & 4 \max_{\tau \in [0,T]} \| u(\tau )\|_n  \int_0^t 
\left ( \frac{\sqrt{Re}}{\sqrt{2e}} \frac{1}{\sqrt{t-\tau }} +1 \right )\| du(\tau ) \|_n  d\tau .
\end{eqnarray*}
Applying the Gronwall's inequality, one gets the estimate (\ref{UB}).
\end{proof}

\begin{remark}
By Theorem \ref{UBT}, for any initial perturbation $\dl u(0)$, the deviation of the corresponding 
solutions can potentially amplifies according to
\[
\| \dl u(t) \|_n \leq  e^{C \sqrt{t Re}\ + \ C_1 t} \| \dl u(0) \|_n .
\]
When the Reynolds number is large, the amplification can
potentially reach substantial amount in short time.  
\end{remark} 

\begin{remark}
The same upper bound (\ref{UB}) also holds for the periodic boundary condition, i.e. when the Navier-Stokes equations (\ref{ONS1})-(\ref{ONS2}) are defined on $d$-dimensional torus $\mathbb{T}^d$ instead of the whole space $\mathbb{R}^d$. 
\end{remark} 

\begin{remark}
The beauty of the upper bound (\ref{UB}) can be revealed when the base solution (where the derivative of the solution map is taken) is steady. In such a case, one is dealing with hydrodynamic stability theory. The zero-viscosity limit of the eigenvalues of the linear Navier-Stokes equations at the steady state can be complicated \cite{LL08}. In the zero-viscosity limit, some of the eigenvalues may persist to be the eigenvalues of the corresponding linear Euler equations \cite{Li05}; some eigenvalues may condense into continuous spectra; and other eigenvalues may approach a set that is not in the spectra of the corresponding linear Euler equations. The $C_1t$ exponent in (\ref{UB}) covers the growth induced by persistent unstable eigenvalues, while the $C\sqrt{tRe}$ exponent in (\ref{UB}) covers the growth induced by the rest eigenvalues. When $Re$ is large, the $C\sqrt{tRe}$ can be large in short time. During such short time, stable eigenvalues do not imply ``decay". Even though its derivative does not exist, directional derivatives of the solution map of Euler equations can exist as shown by the existence of solutions to the well-known Rayleigh equations. The unbounded continuous spectrum \cite{Li05} of the linear Euler equations leads to the nonexistence of the derivatives of the solution map of Euler equations.
\end{remark} 

\begin{remark}
The upper bound (\ref{UB}) is sharp when the base solution (where the derivative of the solution map is taken) is the zero solution $u(t)=0$. In this case, 
\[
\| DS^t(0)\| = 1 ,
\]
and the upper bound (\ref{UB}) is also $1$. In general, estimating the lower bound of $\| DS^t(u(0))\|$ may only be done on a case by case base for the base solutions. When the base solutions are steady, this is the theory of hydrodynamic instability.
\end{remark}

\end{document}